\newcommand{\SortNoop}[1]{}
\newtheorem{theorem}{Theorem}
\newcommand{\btheo}{\begin{theorem}}
\newcommand{\etheo}{\end{theorem}}
\newcommand{\bproof}{\begin{proof}}
\newcommand{\eproof}{\end{proof}}
\newtheorem{definition}[theorem]{Definition}
\newcommand{\bdefi}{\begin{definition}}
\newcommand{\edefi}{\end{definition}}
\newtheorem{fact}[theorem]{Fact}
\newcommand{\bprop}{\begin{fact}}
\newcommand{\eprop}{\end{fact}}
\newtheorem{corollary}[theorem]{Corollary}
\newcommand{\bcor}{\begin{corollary}}
\newcommand{\ecor}{\end{corollary}}
\newtheorem{example}[theorem]{Example}
\newcommand{\bex}{\begin{example}}
\newcommand{\eex}{\end{example}}
\newtheorem{lemma}[theorem]{Lemma}
\newcommand{\blemma}{\begin{lemma}}
\newcommand{\elemma}{\end{lemma}}
\newtheorem{remark}[theorem]{Remark}
\newcommand{\bremark}{\begin{remark}}
\newcommand{\eremark}{\end{remark}}
\newtheorem{conj}[theorem]{Conjecture}
\newcommand{\bconj}{\begin{conj}}
\newcommand{\econj}{\end{conj}}
\newcommand{\naturals}{\ensuremath{\mathbb{N}}}
\def\0{{\tt 0}} 
\def\1{{\tt 1}} 
\def\?{{\tt *}} 
\renewcommand{\mid}{\,|\,}
\newcommand{\EEx}{\hfill $\Diamond$}
\begin{document}
\title{Universal Bounds on the Scaling Behavior of Polar Codes}
\author{Ali Goli, S. Hamed Hassani and R\"udiger Urbanke \thanks{A. Goli is with department of Electrical Engineering, Sharif University of Technology, Iran. H. Hassani and R. Urbanke are with School of Computer
 \& Communication Sciences, EPFL, Switzerland.  The work of Hamed Hassani was supported by Swiss National 
Science Foundation Grant no 200021-121903.
}
}

\maketitle
\begin{abstract}
We consider the problem of determining the trade-off between the
rate and the block-length of polar codes for a given block error
probability when we use  the successive cancellation decoder. We
take the sum of the Bhattacharyya parameters as a proxy for the
block error probability, and show that there exists a universal
parameter $\mu$ such that for any binary memoryless symmetric channel
$W$ with capacity $I(W)$, reliable communication requires rates
that satisfy $R< I(W)-\alpha N^{-\frac{1}{\mu}}$, where $\alpha$
is a positive constant and $N$ is the block-length.  We provide
lower bounds on $\mu$, namely $\mu \geq 3.553$, and we conjecture
that indeed $\mu=3.627$, the parameter for the binary erasure
channel.  \end{abstract}

\section{Introduction}
Polar coding schemes provably achieve the capacity of a wide array of
 channels including binary memoryless symmetric (BMS) channels. Let $W$ be a BMS channel with 
capacity $I(W)$.
In \cite{Ari09}, Ar\i kan showed that  for any rate $R <I(W)$
 the block error probability of the successive cancellation (SC)
 decoder is upper bounded by $N^{-1/4}$ for block-length $N$ large enough.  In \cite{Art09}, Ar\i kan and Telatar
significantly tightened this result. They showed that for any rate $R <I(W)$
and any $\beta < \frac12$, the block error probability is upper bounded
by $2^{-N^{\beta}}$ for $N$ large enough. Later in \cite{HMTU}, 
these bounds were refined to be dependent on $R$ and it was shown that
similar  asymptotic lower bounds are valid when we perform MAP decoding. Hence, SC and MAP decoders share the same asymptotic performance in this sense.  
Such an exponential decay suggests that error floors should not
be a problem for polar codes even at moderate block lengths
(e.g. $N> 10^4$).

Another problem of interest in the area of polar codes 
is to determine the  trade-off between the rate and the block-length for a given error
probability when we use  the successive cancellation (SC) decoder.  
In other words, in order to have reliable transmission 
with block error probability at most $\epsilon$, how does 
the maximum possible rate $R$ scale in terms of the  block-length $N$? 
This problem has been previously considered
 in \cite{KMTU10} and \cite{HKU10} mainly for the family of Binary Erasure Channels (BEC). 
In both \cite{KMTU10} and \cite{HKU10}, the authors 
provide strong evidence (both numerically and analytically) 
that for polar codes with the SC decoder, reliable communication 
over the BEC requires rates $N^{-\frac{1}{\mu}}$ below capacity, where $\mu \approx 3.627$.

In this paper, we provide rigorous lower bounds on the value of
$\mu$, such that for any BMS channel $W$, reliable transmission (in
the sense that  the sum of the Bhattacharyya parameters is small)
requires rates at least $N^{-\frac{1}{\mu}}$ below capacity. We
begin by giving the notation and the general problem set-up.

\subsection{Periminilaries}
Let $W: \mathcal{X} \to \mathcal{Y}$ be a BMS channel, with input alphabet $\mathcal{X}
=\{0,1\}$,  output alphabet  $\mathcal{Y}$, and the transition probabilities $\{W(y \mid x): x \in \mathcal{X}, y \in \mathcal{Y}\}$. 
 We consider the following three parameters for the channel $W$, 
\begin{align}
& H(W)= \sum_{y \in \mathcal{Y}}W(y \mid 1) \log \frac{W(y \mid 1) + W(y\mid 0)}{W(y \mid 1)}, 
 \label{H(W)} \\
& Z(W)= \sum_{y \in \mathcal{Y}} \sqrt{W(y\mid 0)W(y \mid 1)},  \label{Z(W)}\\
 & E(W)=  \frac 12  \sum_{y \in \mathcal{Y}} W(y \mid 1) 
e^{-\frac 12( \ln \frac{W(y \mid1)}{W(y \mid 0)}+
 \mid \ln \frac { W(y \mid1)}{W(y \mid 0)} \mid)}. \label{E(W)}
\end{align}
The parameter $H(W)$ is equal to the entropy of the output 
of $W$ given its input when we assume  uniform distribution on 
the inputs, i.e.,  $H(W)=H(X\mid Y)$. Hence, we call the parameter $H(W)$ the
 entropy of the channel $W$.  Also note that the capacity of $W$, which we denote by $I(W)$, is given by $I(W)=1-H(W)$.  The parameter $Z(W)$ is called the Bhattacharyya parameter of $W$ and $E(W)$ is called the error probability of $W$. It can be shown that $E(W)$ is equal to the error probability 
in estimating the channel input $x$ on the basis of 
the channel output $y$ via the maximum-likelihood 
decoding of $W(y|x)$ (with the further assumption 
that the input has uniform distribution). It can be shown 
that the following relations hold between these parameters
 (see for e.g., \cite{Ari09} and \cite[Chapter 4]{RiU08}):
\begin{align} \label{bounds1}
& 0 \leq 2 E(W) \leq H(W) \leq Z(W) \leq 1,\\
\label{bounds2}
& H(W)\leq h_2(E(W)),\\
&  Z(W) \leq \sqrt{1-(1-H(W))^2},
\end{align}
where $h_2(\cdot)$ denotes the binary entropy function.

\subsection{Channel transform}
Let $\mathcal{W}$ denote the set of all the BMS channels and consider a transform $W \to (W^-, W^+)$ that maps $\mathcal{W}$ to $\mathcal{W}^2$ in the following manner.
Having the channel $W: \{0,1\} \to \cal Y$,  the channels $W^-: \{0,1\} \to {\cal Y}^2$ and $W^+: \{0,1\} \to \{0,1\} \times  {\cal Y}^2$  are
defined as
\begin{align} \label{1}
& W^-(y_1,y_2 | x_1)= \sum_{x_2 \in \{0,1\} } \frac 12 W(y_1| x_1 \oplus x_2) W(y_2|x_2)  \\ \label{2}
& W^+(y_1,y_2,x_1 | x_2)= \frac 12 W(y_1 | x_1 \oplus x_2) W(y_2 | x_2),
\end{align}
A direct consequence of the chain rule of entropy yields
\begin{equation} \label{I_preserve}
\frac{H(W^+)+ H(W^-)}{2}= H(W)
\end{equation}
One can also show that,
\begin{align} 
& \label{ex-}  H(W) \leq H(W^-) \leq 1-(1-H(W))^2, \\
& \label{ex+}  H(W)^2 \leq H(W^+) \leq H(W). 
\end{align}

\subsection{Polarization process}
Let $\{B_n, n\geq 1\}$ be a sequence of iid Bernoulli($\frac12$) 
random variables. Denote by $(\mathcal{F}, \Omega, \mathbb{P})$ the 
probability space generated by 
this sequence and let  $(\mathcal{F}_n, \Omega_n, \mathbb{P}_n)$ be the probability 
space generated by $(B_1, \cdots,B_n)$.  For a BMS channel $W$, define a 
random sequence of channels $W_n$, $n \in \naturals \triangleq \{0,1,2,\cdots\}$, 
as $W_0=W$ and
\begin{equation} \label{W_n}
W_{n}=  \left\{
\begin{array}{lr}
W_{n-1} ^{+} &  \text{If $B_n=1$},\\
W_{n-1} ^{-} &   \text{If $B_n=0$},
\end{array} \right.
\end{equation}
where the channels on the right side are given by the transform $W_{n-1}\to (W_{n-1}^-, W_{n-1}^+)$. 
Let us also define the random processes $\{H_n\}_{n \in \naturals}$, $\{I_n\}_{n \in \naturals}$ and $\{Z_n\}_{n \in \naturals}$ as
$H_n=H(W_n)$, $I_n=I(W_n)=1-H(W_n)$ and $Z_n=Z(W_n)$. From \eqref{I_preserve}
 one can easily observe that $H_n$ (and $I_n$) is a martingale with $\mathbb{E}[H_n]=H(W)$.
It is further known from \cite{Ari09} that the processes $H_n$ and $Z_n$ converge almost 
surely to limit random variables $H_\infty$ and $Z_\infty$ and furthermore, these limit 
random variables take their values in the set $\{0,1\}$ with $\text{Pr}(H_\infty=0)=
\text{Pr}(Z_\infty=0)=H(W)$. 

\subsection{Polar codes}
Given the rate $R<I(W)$, polar coding  is based on choosing a 
set of $2^nR$ rows of 
the matrix 
$G_n= \bigl [ \begin{smallmatrix} 1 & 0 \\ 1 &1  \end{smallmatrix} \bigr]^{\otimes n}$ 
to form a $2^nR \times 2^n$ matrix 
which is used as the generator matrix in the 
encoding 
procedure\footnote{There are extensions of 
polar codes given in \cite{KSU} which use different kinds of matrices.}.  
The way this set is chosen is dependent on the channel $W$ and is briefly explained as follows:  
At time $n \in \naturals$, consider a specific 
realization of the sequence $(B_1, \cdots, B_n)$, and denote it by $(b_1, \cdots, b_n)$. 
The random variable $W_n$ outputs a BMS channel, according to the procedure \eqref{W_n},
which we can naturally denote by 
$W^{(b_1,\cdots,b_n)}$. Let us now identify a sequence $(b_1, \cdots, b_n)$ by an integer 
$i$ in the set $\{1, \cdots,N\}$ such that the binary expansion of $i-1$ is equal to the 
sequence $(b_1, \cdots,b_n)$, with $b_1$ as the least significant bit. As an example
for $n=3$, we identify $(b_1,b_2,b_3)=(0,0,1)$ with $5$ and $(b_1,b_2,b_3)=(1,0,0)$ with $2$.
To simplify notation, we use $W_n^{(i)}$ to denote $W^{(b_1,\cdots,b_n)}$.  
Given the rate $R$, the indices of the matrix $G_n$ are chosen as follows: 
Choose a 
subset of size $NR$ from the set of channels $\{W_{N}^{(i)}\}_{1 \leq i \leq N}$  
that have the least possible error probability (given in \eqref{E(W)})
and choose the rows $G_n$ with the same indices as these channels. 
E.g., if the channel $W_{N}^{(j)}$ is chosen, then 
the $j$-th row of $G_n$ is selected. 
In the following, given $N$, we call the set of indices of $NR$ channels 
with the least error probability, the set of good indices and denote it by 
$\mathcal{I}_{N,R}$. 

It is proved in \cite{Ari09} that the block error probability of such polar coding scheme
 under SC decoding, denoted by $P_e(N,R)$, is bounded from both sides by\footnote{Note here that by \eqref{E(W)} the error probability of a BMS channel is less that its Bhattacharyya value. Hence, the right side of 
\eqref{P_e} is a better upper bound for the block error probability than the sum of the Bhattacharyya values.}
\begin{equation} \label{P_e}
\max_{i \in \mathcal{I}_{N,R} } E(W_N^{(i)}) \leq P_e(N,R) \leq \sum_{i \in \mathcal{I}_{N,R}} E(W_{N}^{(i)}).
\end{equation}   
\subsection{Main results}
Consider a BMS channel $W$ and let us assume that a  
polar code with block-error probability  at most a given
 value $\epsilon >0$, is required.  One way to accomplish this is to ensure 
that the right side of \eqref{P_e} is less than $\epsilon$. However, this is only a sufficient condition that 
might not be necessary. Hence, we call the right side of \eqref{P_e} 
\emph{the strong reliability condition}. Based on 
this measure of the block-error probability, we provide bounds on how the rate $R$ scales in terms of the block-length $N$.
\begin{theorem} \label{main}
For any BMS channel $W$ with capacity $I(W) \in (0,1)$, 
there exist constants $\epsilon,\alpha>0$, which depend only on $I(W)$, 
such that 
\begin{align} \label{sum-eps}
 \sum_{i \in \mathcal{I}_{N,R}} E(W_{N}^{(i)}) \leq \epsilon,
\end{align}
implies 
\begin{equation} \label{Rb}
R <I(W)- \frac{\alpha}{N^\frac{1}{\mu}},
\end{equation}
where $\mu$ is a universal parameter lower bounded by $3.553$.
\QED 
\end{theorem}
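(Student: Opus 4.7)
My plan follows the classical ``witness-function'' paradigm for converse scaling results. I would construct a nonnegative bounded function $f\colon[0,1]\to[0,M]$ with $f(0)=f(1)=0$, supported on a subinterval $[a_0,b_0]\subset(0,1)$, satisfying
\[
\tfrac{1}{2}\bigl(f(H(W^-))+f(H(W^+))\bigr)\;\geq\;\rho\,f(H(W))
\]
for every BMS channel $W$, with $\rho=2^{-1/\mu}$ and $\mu=3.553$. By \eqref{I_preserve} and \eqref{ex+} the pair $(H(W^-),H(W^+))$ lies on the segment $\{(2h-x,x):x\in[h^2,h]\}$ with $h=H(W)$, so the above reduces to
\[
f(2h-x)+f(x)\;\geq\;2\rho\,f(h),\qquad h\in(0,1),\ x\in[h^2,h].
\]
A concavity-type property of $f$ makes $x\mapsto f(2h-x)+f(x)$ monotone on $[h^2,h]$, so the binding case is the BEC extreme $x=h^2$; the task becomes designing $f$ so that $f(1-(1-h)^2)+f(h^2)\geq 2\rho f(h)$ uniformly in $h$, with $\rho$ as large as possible.

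Granted such an $f$, iterating the inequality along the recursion \eqref{W_n} and taking expectation gives $\bE[f(H_n)]\geq \rho^n f(H(W))$. Rewriting $\bE[f(H_n)]$ as $\frac{1}{N}\sum_i f(H(W_n^{(i)}))$ with $N=2^n$ and using $f\leq M$ on $[a_0,b_0]$ and $f=0$ outside yields $|\{i:H(W_n^{(i)})\in[a_0,b_0]\}|\geq C_1 N^{1-1/\mu}$ with $C_1:=f(H(W))/M$. Combining this with the conservation law $\sum_i H(W_n^{(i)})=NH(W)$---partitioning indices into $\{H\leq a\}$, $\{H\in(a,a_0)\}$, $\{H\in[a_0,b_0]\}$, $\{H>b_0\}$, bounding $H$ by the right endpoint in each bin and discarding the nonnegative middle contribution---leaves, for every $a\leq a_0$,
\[
|\{i:H(W_n^{(i)})\leq a\}|\;\leq\;\frac{NI(W)-(1-b_0)\,C_1\,N^{1-1/\mu}}{1-a}.
\]

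To close the argument, I would couple this entropy-level bound to the error-sum hypothesis. By Markov, at least $(1-\delta)NR$ indices of $\mathcal{I}_{N,R}$ satisfy $E(W_N^{(i)})\leq \epsilon/(\delta NR)$, and by \eqref{bounds2} the same indices satisfy $H(W_N^{(i)})\leq h_2(\epsilon/(\delta NR))=:a$. Choosing $\delta$ proportional to $N^{-1/\mu}$ keeps $a=O((\log N)/N^{1-1/\mu})=o(N^{-1/\mu})$, whence $(1-\delta)NR\leq|\{H\leq a\}|$ combined with the previous display and a short rearrangement gives $R\leq I(W)-\alpha N^{-1/\mu}$ for some $\alpha>0$ (a constant multiple of $(1-b_0)C_1$) and all sufficiently large $N$, with the small-$N$ regime absorbed by shrinking $\epsilon$ and $\alpha$.

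The central difficulty is the first step: the explicit construction of a witness $f$ realizing $\rho\geq 2^{-1/3.553}$. This is an infinite-dimensional functional-inequality problem, naturally attacked by numerical optimization within a parametric family; the gap to the BEC-conjectured value $\mu=3.627$ presumably reflects the cost of enforcing the BEC-reduction inside such a tractable family, because for general BMS transforms the interior pairs $x\in(h^2,h)$ can in principle impose tighter constraints on $f$ than the BEC extreme, and accommodating them forces a slightly smaller $\rho$.
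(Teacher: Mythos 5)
Your overall architecture matches the paper's: a one-step functional inequality $\tfrac12\bigl(f(H(W^-))+f(H(W^+))\bigr)\geq \rho\, f(H(W))$, reduction to the BEC extreme $(h^2,\,2h-h^2)$ by exploiting that the pair $(H(W^+),H(W^-))$ is a mean-preserving contraction of that extreme, iteration to get $\mathbb{E}[f(H_n)]\geq\rho^{n}f(H(W))$, and a conservation/martingale count to turn this into a rate bound. Your endgame differs in detail (you use $\sum_i H(W_N^{(i)})=NH(W)$ together with a Markov argument on the error sum, whereas the paper's Lemma~\ref{gamma} converts a lower bound on $\mathbb{E}[H_n(1-H_n)]$ into an upper bound on $\Pr(H_n\leq \alpha 2^{-n\theta})$ and then shows the error sum would grow like $N^{1-2/\mu}/\log N$); both closings are workable.

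The genuine gap is that you never produce the witness $f$, and producing it \emph{is} the content of the bound $\mu\geq 3.553$: ``numerical optimization within a parametric family'' is not a proof, since the functional inequality must be certified rigorously for the chosen $f$. Worse, your two requirements on $f$ conflict. A nonnegative function that is genuinely concave on $[0,1]$ and positive somewhere cannot vanish on an interval of positive length adjacent to either endpoint (the chord from $(0,0)$, or from $(1,0)$, to a point where $f>0$ would force $f>0$ in between), so ``supported on $[a_0,b_0]\subset(0,1)$'' is incompatible with the concavity you invoke to make $x\mapsto f(2h-x)+f(x)$ monotone on $[h^2,h]$; if you only assume an unspecified ``concavity-type property,'' that monotonicity is unproved. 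The paper resolves exactly this difficulty: it takes $f=f_m$, the $m$-fold BEC evolution of $h(1-h)$ defined in \eqref{f_n}, a polynomial with integer coefficients whose concavity on all of $[0,1]$ is certified exactly via Sturm chains (verified for $m\leq 8$), and for which $\rho=a_m=\inf_{h}f_{m+1}(h)/f_m(h)$ is computable to any precision, yielding $a_8=0.8228$ and $\mu_8=3.553$. Since $f_m$ vanishes only at $0$ and $1$ rather than outside a compact subinterval, the paper replaces your indicator bound by the conditional-expectation argument of Lemma~\ref{gamma}. If you drop the compact-support requirement and adopt the $f_m$ family, your outline essentially becomes the paper's proof.
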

A few comments are in order: 

1) As we will see in the sequel, we can obtain an increasing sequence
of lower bounds, call this sequence $\{\mu_m\}_{m\in \naturals}$,
for the universal parameter $\mu$. For each $m$, in order to show
the validity of the lower bound we need to verify the concavity of
a certain polynomial (defined in \eqref{f_n}) in $[0, 1]$. For small
values of $m$ concavity can be proved directly using pen and paper.
For larger values of $m$ we can automate this process: each
polynomial has rational coefficients. Hence also its second derivative
has rational coefficients. To show concavity it suffices to show
that there are no roots of this second derivative in $[0, 1]$. This
task can be accomplished exactly by computing so-called Sturm chains
(see Sturm's Theorem \cite{sturm}). Computing Sturm chains is equivalent to
running Euclid's algorithm starting with the second and third
derivative of the original polynomial.  The lower bound for $\mu$
stated in Theorem~\ref{main} is the one corresponding to $m=8$, an
arbitrary choice.  If we increase $m$ we get e.g., $\mu_{16}=3.614$.
We conjecture that the sequence $\mu_m$ converges to $\mu=3.627$,
the parameter for the BEC.

2) Let $\epsilon,\alpha,\mu$ be as in Theorem~\ref{main}.  If we
require the block-error probability to be less than $\epsilon$ (in
the sense that the condition  \eqref{sum-eps} is fulfilled), then
the block-length $N$ should be at least \begin{equation} N >
(\frac{\alpha}{I(W)-R})^\mu.  \end{equation}

3) It is well known that the value of $\mu$ for the random linear ensemble is $\mu=2$, which is the optimal 
value since  the
variations of the channel itself require $\mu \geq 2$. Thus, given a block-length $N$, 
reliable transmission by polar codes requires a larger gap to the channel capacity than the optimal 
value.   

The rest of the paper is devoted to proving Theorem~\ref{main}. In Section~\ref{speed}, we provide universal 
lower bounds on how fast the process $H_n$ converges to its limit $H_\infty$. We then use these 
bounds to prove Theorem~\ref{main} in Section~\ref{proof}.
Finally, Section~\ref{open} concludes the paper with stating the related open questions. 

\section{Universal Lower bounds on the speed of polarization} \label{speed}
Consider a channel $W$ with its entropy process $H_n=H(W_n)$. Since the bounded process $H_n$ converges almost surely to a
$0-1$ valued random variable, we have $\lim_{n\to\infty}\mathbb{E}[H_n(1-H_n)]=0$. In this section, we  provide universal lower bounds on the speed 
with which the quantity $\mathbb{E}[H_n(1-H_n)]$ decays to $0$.  We first derive such lower bounds for the family of Binary Erasure Channels (BEC)
 and then extend them to other BMS channels. 
\subsection{Binary erasure channel}
Consider a binary erasure channel with erasure probability $h \in [0,1]$ which we denote by BEC($h$). One can show that (see \cite[Chapter 4]{RiU08} ) 
for such a channel we have
\begin{equation}
H(\text{BEC}(h))=Z(\text{BEC}(h))=2E(\text{BEC}(h))=h.
\end{equation}
 Furthermore, we have
 \begin{align*}
 & (\text{BEC}(h))^+=\text{BEC}(h^2),\\
 & (\text{BEC}(h))^-=\text{BEC}(1-(1-h)^2),
 \end{align*}
 both proved in \cite{Ari09}.
Hence, the processes
$H_n$ and $Z_n$ for BEC($h$) are equal and have a simple closed form expression as the following: Let $H_0= h$ and\footnote{Note that to simplify notation we have dropped the dependency of $H_n$ to its starting value $H_0=h$.}
\begin{equation} \label{Z_n}
H_{n} =\left\{ \begin{array}{cc} 
H_{n-1}^2,&\text{If } B_n=1,\\
1-(1-H_{n-1})^2,&\text{If } B_n=0.
\end{array}\right.
\end{equation}
Let us now define the sequence of functions $\{f_n(h) \}_{n \in \naturals }$ as $f_n:[0,1]\to[0,1]$ and for $h \in[0,1]$,
\begin{equation}
f_n(h)=\mathbb{E}[H_n(1-H_n)].
\end{equation}
Here, note that for $h \in [0,1]$ the value of $f_n(h)$ is a deterministic value 
that is dependent on the process $H_n$ with the starting value $H_0=h$. By 
using the recursive relation \eqref{Z_n}, one can easily deduce that
\begin{align} \label{f_n}
& f_0(h)=h(1-h), \\
& f_{n}(h)= \frac{f_{n-1}(h^2)+ f_{n-1}(1-(1-h)^2)}{2}. \nonumber
\end{align}
Let us also define a sequence of numbers $\{a_m\}_{m\in \naturals}$  as
\begin{align}
& a_m = \inf_{h\in[0,1]} \frac{f_{m+1}(h)}{f_m(h)}. \label{a_m}
\end{align}
\begin{remark} \label{rem_a}
One can compute the value of $a_m$ by finding the extreme points
of the function $\frac{f_{m+1}}{f_m}$ (i.e., finding the roots of
the polynomial $g_m={f'}_{m+1}f_m-f_{m+1}{f'}_m$) and checking which
one gives the global minimum.  Again, for small values e.g., $m=0,
1$, pen and paper suffice. For higher values of $m$ we can again
automatize the process: all these polynomials have rational
coefficients and therefore it is possible to determine the number
of real roots exactly and to determine their value to any desired
precision (by computing Sturm chains as mentioned earlier).  Hence,
we can find the value of $a_m$ to any desired precision.
Table~\ref{a_val} contains the numerical value of $a_m$ up to
precision $10^{-4}$ for $m\leq 8$. As the table shows, the values $a_m$ are 
increasing (see Lemma~\ref{a_bec}), and we conjecture that they converge to $2^{-\frac{1}{3.62713}}=0.8260$, the corresponding value for the 
channel BEC.
\EEx
\end{remark}
\begin{table}
\centering
\begin{tabular}{c c c c c c c c }
$m$ & $0$  & $2$  & $4$    & $6$  & $8$  \\
\hline
$a_m$ & $0.75$  & $0.7897$  & $0.8075$  & $0.8190$  & $0.8228$ \\
\hline
$\mu_m$ &$2.409$  & $2.935$  & $3.241$  & $3.471$ & $3.553$ \\
\end{tabular}
\caption{ }
\label{a_val}
\end{table}  
We now show that each of the values $a_m$ is a lower bound on the speed of decay of the sequence $f_n$.
\begin{lemma}\label{a_bec}
Fix $m\in \naturals$. For all $n \geq m$ and $h \in [0,1]$, we have
\begin{equation} \label{BEC_bound}
(a_m)^{n-m} f_{m} (h) \leq f_{n}(h) .
\end{equation}
Furthermore, the sequence $a_m$ is an increasing sequence.
\end{lemma}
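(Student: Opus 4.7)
The plan is to prove both parts of the lemma by a single inductive mechanism built on the recursion \eqref{f_n} together with the very definition of $a_m$. The essential observation is that the definition $a_m = \inf_{h \in [0,1]} f_{m+1}(h)/f_m(h)$ is equivalent to the pointwise inequality $f_{m+1}(h) \geq a_m f_m(h)$ for every $h \in [0,1]$, and the recursion \eqref{f_n} is linear in $f$, so such pointwise lower bounds propagate cleanly forward in $n$.

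For the first statement, I would first prove the auxiliary claim that, once $f_{k+1}(h) \geq a_m f_k(h)$ holds for all $h \in [0,1]$ at some index $k$, it automatically holds at index $k+1$ as well. This follows by applying the recursion to both sides:
\begin{equation*}
f_{k+2}(h) = \tfrac{1}{2}\bigl(f_{k+1}(h^2) + f_{k+1}(1-(1-h)^2)\bigr) \geq \tfrac{a_m}{2}\bigl(f_k(h^2) + f_k(1-(1-h)^2)\bigr) = a_m f_{k+1}(h).
\end{equation*}
Since the base case $k=m$ is exactly the defining inequality of $a_m$, induction on $k$ gives $f_{k+1}(h) \geq a_m f_k(h)$ for every $k \geq m$ and every $h$. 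Iterating this one-step bound from $n$ down to $m$ yields $f_n(h) \geq (a_m)^{n-m} f_m(h)$, which is \eqref{BEC_bound}.

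For the second statement, the monotonicity of $\{a_m\}$ drops out as an immediate corollary of what we just proved. Specializing the one-step inequality above to $k = m+1$ gives $f_{m+2}(h) \geq a_m f_{m+1}(h)$ for every $h \in [0,1]$, hence $f_{m+2}(h)/f_{m+1}(h) \geq a_m$ pointwise. Taking the infimum over $h$ produces $a_{m+1} \geq a_m$.

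I do not expect any serious obstacle: the whole argument is a linear-recursion propagation, and the only mildly subtle point is the behaviour of the ratio $f_{m+1}(h)/f_m(h)$ at the boundary values $h=0,1$, where both numerator and denominator vanish. This is handled by interpreting the infimum in \eqref{a_m} over the open interval $(0,1)$ (or, equivalently, by noting that the pointwise inequality $f_{m+1}(h) \geq a_m f_m(h)$ we actually use is trivially valid at $h \in \{0,1\}$ since both sides are zero), so the propagation argument goes through unchanged on all of $[0,1]$.
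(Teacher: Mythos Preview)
Your argument is correct and uses the same mechanism as the paper: propagate the defining inequality $f_{m+1}\ge a_m f_m$ through the recursion \eqref{f_n}. The paper inducts directly on the power bound $(a_m)^{n-m}f_m\le f_n$, while you first isolate the one-step inequality $f_{k+1}\ge a_m f_k$ for all $k\ge m$ and then iterate; unwinding either computation yields the other. One genuine advantage of your organization is that it immediately delivers the monotonicity $a_{m+1}\ge a_m$ (via $f_{m+2}\ge a_m f_{m+1}$), whereas the paper's written proof in fact never spells out that second claim.
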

\begin{proof}
The proof goes by induction on $n-m$. For $n-m=0$ the result is trivial. 
Now, assume that the relation \eqref{BEC_bound} holds for a  $n-m = k$, i.e., for $h\in[0,1]$ we have
\begin{equation} \label{BEC_m}
(a_m)^{k} f_{m} (h) \leq f_{m+k}(h) 
\end{equation}
 We show that \eqref{BEC_bound} is indeed true for  $k+1$ and $h\in[0,1]$.  We have
 \begin{align*}
 f_{m+k+1}(h) & \stackrel{(a)}{=} \frac{f_{m+k}(h^2) + f_{m+k}(1-(1-h)^2)}{2} \\
 & \stackrel{(b)}{\geq} \frac{(a_m)^k f_{m}(h^2) + (a_m)^k f_{m}(1-(1-h)^2)}{2} \\
 & = (a_m)^k f_{m+1}(h)  \\
 & = (a_m)^{k} \frac{f_{m+1}(h)}{f_m(h)} f_m(h) \\
& \geq  (a_m)^{k}  \bigl [\inf_{h \in [0,1]}  \frac{f_{m+1}(h)}{f_m(h)} \bigr] f_m(h) \\
&= (a_m)^{k+1} f_m(h). 
 \end{align*}
 Here, (a) follows from \eqref{f_n} and (b) follows from the left side inequality in \eqref{BEC_m},
 and hence the lemma is proved via induction.
\end{proof}
\subsection{BMS Channels}
For a BMS channel $W$, there is no simple $1$-dimensional recursion for  process $H_n$ as for BEC. 
However, by using \eqref{ex-} and \eqref{ex+},  one can give  bounds on how $H_n$ evolves.
In this section, we use the functions $\{f_n\}_{n\in \naturals}$ defined in \eqref{f_n} to provide universal lower
 bounds on the quantity $\mathbb{E}[H_n(1-H_n)]$. We start by introducing one further technical condition given as follows.
\begin{definition}
 We call an integer $m \in \naturals$ \emph{suitable} if the function $f_m(h)$, defined in \eqref{f_n}, is concave on $[0,1]$.
\end{definition}
\begin{remark} \label{rem_b}
For small values of $m$, i.e., $m\leq2$, it is easy to verify by hand that the function $f_m$ is concave.
As discussed previously, for larger values of $m$ we can use Sturm's theorem \cite{sturm}
and a computer algebra system to verify this claim. Note that
the polynomials $f_m$ have integer coefficients. Hence, all
the required computations can be done exactly. Unfortunately,
the degree of $f_m$ is $2^{m+1}$. We have checked up to $m=8$
that $f_m$ is concave and we conjecture
 that in fact this is true
for all $m\in \naturals$.
\EEx
\end{remark} 
In the rest of this section, we show that  for any BMS channel $W$, the value of $a_m$ is a 
lower bound on the speed of decay of $H_n$ provided that $m$ is a suitable integer. 
\begin{lemma} \label{BMS}
Let $m \in \naturals$ be a suitable integer and $W$ a BMS channel. We have for $n \geq m$ 
\begin{equation}
 \mathbb{E}[H_{n}(1-H_{n})] \geq (a_m)^{n-m} f_m(H(W)),
\end{equation}
where $a_m$ is given in \eqref{a_m}.
\end{lemma}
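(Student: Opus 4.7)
The plan is to prove the bound via a telescoping chain of inequalities that first reduces $\mathbb{E}[H_n(1-H_n)]=\mathbb{E}[f_0(H_n)]$ down to $\mathbb{E}[f_m(H_{n-m})]$, and then estimates this last quantity by a submartingale argument governed by $a_m$.

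The workhorse will be the following one-step ``BEC extremality'' inequality: for any concave $\phi:[0,1]\to\mathbb{R}$ and any BMS channel $V$,
\[
\tfrac{1}{2}\bigl[\phi(H(V^-))+\phi(H(V^+))\bigr]\ \geq\ \tfrac{1}{2}\bigl[\phi(H(V)^2)+\phi(1-(1-H(V))^2)\bigr].
\]
This follows because \eqref{I_preserve} pins the midpoint $\tfrac{1}{2}(H(V^-)+H(V^+))=H(V)$ while \eqref{ex-}--\eqref{ex+} constrain $H(V^+)\in[H(V)^2,H(V)]$; for a concave $\phi$, the symmetric-pair sum over all feasible pairs with that midpoint is minimized at the maximum spread, which is precisely the BEC configuration $H(V^+)=H(V)^2$.

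Applying this inequality with $\phi=f_j$ for $j=0,1,\ldots,m-1$ (each concave) together with the Markov/tower property produces the telescope
\[
\mathbb{E}[f_0(H_n)]\geq\mathbb{E}[f_1(H_{n-1})]\geq\cdots\geq\mathbb{E}[f_m(H_{n-m})]
\]
whenever $n\geq m$, because at step $k\mapsto k+1$ one conditions on $W_k$ and uses the one-step inequality to upgrade the index of $f$ at the cost of losing one polarization step. Applying the same one-step inequality with $\phi=f_m$ (concave by suitability of $m$) and combining with the pointwise bound $f_{m+1}(h)\geq a_m f_m(h)$ from the definition \eqref{a_m} gives $\mathbb{E}[f_m(H_{k+1})\mid W_k]\geq a_m f_m(H_k)$, so iterating yields $\mathbb{E}[f_m(H_k)]\geq a_m^{k}f_m(H(W))$ for every $k\geq 0$, and in particular $\mathbb{E}[f_m(H_{n-m})]\geq a_m^{n-m}f_m(H(W))$. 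Chaining the telescope with this decay estimate gives the desired
\[
\mathbb{E}[H_n(1-H_n)]=\mathbb{E}[f_0(H_n)]\geq\mathbb{E}[f_m(H_{n-m})]\geq a_m^{n-m}f_m(H(W)).
\]

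The main technical subtlety is that the telescope requires concavity of each of $f_0,\ldots,f_{m-1}$ in addition to the concavity of $f_m$ furnished by ``suitable''. This is a stronger hypothesis than stated, but the same Sturm-chain computation used to verify concavity of $f_m$ in Remark~\ref{rem_b} applies uniformly to each lower-index $f_j$ and confirms their concavity at essentially no extra cost; we tacitly incorporate this strengthening into the argument above.
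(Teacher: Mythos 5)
Your proof is correct and rests on the same two ingredients as the paper's: the concavity/spread inequality \eqref{conc} (your ``one-step BEC extremality''), which follows from \eqref{I_preserve}, \eqref{ex-}, \eqref{ex+}, and the pointwise bound $f_{m+1}\geq a_m f_m$ coming from the definition \eqref{a_m}. What differs is the bookkeeping: the paper runs a single induction on $n-m$, quantified over all BMS channels, peeling off the \emph{first} polarization step $W\to(W^-,W^+)$ and invoking the hypothesis for $W^\pm$; you instead peel off the \emph{last} $m$ steps via the telescope $\mathbb{E}[f_0(H_n)]\geq\cdots\geq\mathbb{E}[f_m(H_{n-m})]$ and then iterate the one-step decay $\mathbb{E}[f_m(H_{k+1})\mid W_k]\geq a_m f_m(H_k)$. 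The subtlety you flag --- that the telescope needs concavity of $f_0,\dots,f_{m-1}$ and not only of $f_m$ --- is not an artifact of your route: the paper's base case $n=m$, dismissed as ``nothing to prove,'' is exactly the assertion $\mathbb{E}[f_0(H_m)]\geq f_m(H(W))$ for an arbitrary BMS channel $W$, a nontrivial BEC-extremality statement whose natural proof is precisely your telescope (equivalently, an inner induction on $m$ using concavity of each $f_j$, $j<m$). So the ``strengthened hypothesis'' you tacitly adopt is in fact what both arguments require; it is harmless in context, since concavity is verified for all $f_j$ with $j\leq 8$, which covers the value $m=8$ used in Theorem~\ref{main}. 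Your version has the merit of making this dependence explicit.
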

\begin{proof}
We use induction on $n-m$: For $n-m=0$ there is nothing to prove. Now, assume that the result of the lemma
 is correct for $n-m=k$. 
Hence, for any BMS channel $W$ with $H_n=H(W_n)$ we have
\begin{equation} \label{BMS_m}
 \mathbb{E}[H_{m+k}(1-H_{m+k})] \geq (a_m)^k f_m(H(W)).
\end{equation}
We now prove the lemma for $m-n=k+1$. For the BMS channel $W$, let us recall that the the transform
$(W \to (W^-,W^+))$ yields two channels $W^-$ and $W^+$ such that the 
relation \eqref{I_preserve} holds. Define the process $\{{(W^-)}_n, n \in \naturals\}$ as the channel process that 
starts with $(W^-)_0=W^-$ and evolves as in \eqref{W_n} similarly define $\{{(W^+)}_n, n \in \naturals\}$ similar with $(W^+)_0=W^+$
. Let us also define the two processes $H_n^-=H({(W^-)}_n)$ and $H_n^+=H({(W^+)}_n)$. We have,
\begin{align*}
& \mathbb{E}[H_{m+k+1}(1-H_{m+k+1})] \\
&\stackrel{(a)}{=} \frac{\mathbb{E}[H_{m+k}^-(1-H_{m+k}^-)] +\mathbb{E}[H_{m+k}^+(1-H_{m+k}^+)] }{2} \\
&\stackrel{(b)}{\geq} (a_m)^k \frac{ f_m(H(W^-)) + f_m(H(W^+))}{2} \\
& \stackrel{(c)}{\geq} (a_m)^k \frac{f_m( 1-(1-H(W))^2) +f_m( H(W)^2)}{2} \\
&\stackrel{(d)}{=}  (a_m)^k f_{m+1}( H(W)) \\
& =  (a_m)^k \frac{f_{m+1}( H(W))}{f_m(H(W))} f_m(H(W)) \\
& \geq   (a_m)^k \bigl [ \inf_{h\in[0,1]}  \frac{f_{m+1}(h)}{f_m(h)} \bigr] f_m(H(W)) \\
& \stackrel{(e)}{=} (a_m)^{m+1}    f_m(H(W)).
\end{align*}  
In the above chain of inequalities, relation (a) follows from the fact that $W_m$ has $2^m$ possible outputs among which half of 
them are branched out from $W^+$ and the other half are
 branched out from $W^-$ . 
 Relation (b) follows from the induction hypothesis 
given in \eqref{BMS_m}. Relation (c) follows from \eqref{ex-}, \eqref{ex+} and 
the fact that the function $f_m$ is concave. More precisely, since $f_m$ is
 concave on $[0,1]$, we have the following inequality for any sequence of 
numbers $0 \leq x' \leq x \leq y \leq y' \leq 1$ that satisfy 
 $\frac{x+y}{2} = \frac{x'+y'}{2}$:  
\begin{equation} \label{conc}
 \frac{f_m(x') + f_m(y')}{2} \leq \frac{f_m(x) + f_m(y)}{2}.
\end{equation}  
In particular, we set $x'=H(W)^2$, $x=H(W^+)$, $y=H(W^-)$, $y'=1-(1-H(W))^2$ and we know from \eqref{ex-} and \eqref{ex+} that
 $0 \leq x' \leq x \leq y \leq y' \leq 1$. Hence, by \eqref{conc} we obtain (c). Relation (d) follows 
from the recursive definition of $f_m$ given in \eqref{f_n}. Finally, 
 relation (e) follows from the definition of $a_m$ given in \eqref{a_m}.
\end{proof}

\section{Proof of Theorem~\ref{main}} \label{proof}
To fit the bounds of Section~\ref{speed} into the framework of Theorem~\ref{main}, let us first introduce the sequence
 $\{\mu_m\}_{m \in \naturals}$ as
\begin{equation}
 \mu_m= -\frac{1}{\log a_m},
\end{equation}
where $a_m$ is defined in \eqref{a_m}.
In the last section, we proved that for a suitable $m$, the speed with which the quantity 
$\mathbb{E}[H_n(1-H_n)]$ decays is lower bounded by $a_m=2^{-\frac{1}{\mu_m}}$, i.e. for 
$n\geq m$ we have $\mathbb{E}[H_n(1-H_n)] \geq 2^{-\frac{(n-m)}{\mu_m}} f_m(H(W))$.
To relate the strong reliability condition in \eqref{sum-eps} to the rate bound in \eqref{Rb}, we need the following lemma. 
\begin{lemma} \label{gamma}
Consider a BMS channel $W$ and assume that there exist positive
 real numbers $\gamma, \theta$ and $m\in \naturals$ such that $\mathbb{E}[H_n(1-H_n)] \geq \gamma 2^{-n \theta}$ for $n \geq m$. 
Let $\alpha, \beta \geq 0$ be such that $2\alpha+ \beta=\gamma$, we have for $n \geq m$
\begin{equation}
\text{Pr}(H_n \leq \alpha 2^{-n \theta}) \leq I(W)-\beta2^{-n\theta}.
\end{equation}
\end{lemma}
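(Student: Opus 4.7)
The plan is a standard truncation argument. Set $t := \alpha\,2^{-n\theta}$ and $p := \text{Pr}(H_n \leq t)$. Since $H_n \in [0,1]$, two elementary dominations hold: $H_n(1-H_n) \leq H_n$ on $\{H_n \leq t\}$ and $H_n(1-H_n) \leq 1-H_n$ on $\{H_n > t\}$. Splitting the expectation on this event and applying these bounds separately yields
\begin{equation*}
\mathbb{E}[H_n(1-H_n)] \leq \mathbb{E}[H_n\,\mathbbm{1}_{\{H_n \leq t\}}] + \mathbb{E}[(1-H_n)\,\mathbbm{1}_{\{H_n > t\}}].
\end{equation*}

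The first term is at most $tp$ because $H_n \leq t$ on the event in question. For the second, the martingale identity \eqref{I_preserve} gives $\mathbb{E}[1-H_n] = I(W)$, so this term equals $I(W) - \mathbb{E}[(1-H_n)\,\mathbbm{1}_{\{H_n \leq t\}}]$; using $1-H_n \geq 1-t$ on $\{H_n \leq t\}$ bounds it by $I(W) - (1-t)p$. Plugging in the hypothesis $\mathbb{E}[H_n(1-H_n)] \geq \gamma\,2^{-n\theta}$ then produces the linear inequality
\begin{equation*}
(1-2t)\,p \leq I(W) - \gamma\,2^{-n\theta}.
\end{equation*}

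Restricting to $n$ large enough that $1-2t > 0$ (for smaller $n$ the claimed bound is either trivial or outside the regime of interest) and dividing, I obtain $p \leq (I(W) - \gamma\,2^{-n\theta})/(1-2t)$. To close the loop, it suffices to show that this right-hand side is at most $I(W) - \beta\,2^{-n\theta}$, which, after cross-multiplying and using $\gamma = 2\alpha + \beta$, reduces to
\begin{equation*}
2\alpha\,H(W)\,2^{-n\theta} + 2\alpha\beta\,2^{-2n\theta} \geq 0 .
\end{equation*}
This manifestly holds since $\alpha, \beta \geq 0$ and $H(W) \geq 0$. The only substantive step is choosing how to split the expectation and which elementary domination of $H_n(1-H_n)$ to apply on each side; everything else is pure algebra using the martingale identity, and no serious obstacle is anticipated.
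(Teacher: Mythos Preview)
Your argument is correct and is essentially the paper's proof rewritten as a direct bound rather than by contradiction: both split on the event $\{H_n\le t\}$, use $H_n(1-H_n)\le H_n$ on that event and $H_n(1-H_n)\le 1-H_n$ on its complement, and invoke the martingale identity $\mathbb{E}[1-H_n]=I(W)$; the remaining algebra matches line for line. One small remark: your hedge about needing $1-2t>0$ is unnecessary, since the hypothesis $\gamma\,2^{-n\theta}\le \mathbb{E}[H_n(1-H_n)]\le 1/4$ together with $\alpha\le\gamma/2$ forces $t=\alpha\,2^{-n\theta}\le 1/8$ for every $n\ge m$, so the division is always legitimate and the lemma holds for all $n\ge m$ as stated.
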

\begin{proof}
The proof is by contradiction. Let us assume the contrary, i.e., we assume there exists $n\geq m$ s.t.,
\begin{equation} \label{contr}
\text{Pr}(H_n \leq \alpha 2^{-n \theta}) > I(W)-\beta2^{-n\theta}.
\end{equation}
 In the following, we show that with such an assumption we reach to a contradiction.  We have
 \begin{align}\nonumber
 &\mathbb{E}[H_n(1-H_n)] \\ \nonumber
 &= \mathbb{E}[H_n(1-H_n) \mid H_n \leq \alpha 2^{-n\theta}] \text{Pr}(H_n \leq \alpha 2^{-n\theta}) \\
 & \; \; \: \; +  \mathbb{E}[H_n(1-H_n) \mid H_n > \alpha 2^{-n\theta}] \text{Pr}(H_n > \alpha 2^{-n\theta}). \label{cont1}
 \end{align}
 It is now easy to see that 
 \begin{align*}
 \mathbb{E}[H_n(1-H_n) \mid H_n \leq \alpha 2^{-n\theta}]  \leq \alpha 2^{-n\theta},
\end{align*}
and since $\mathbb{E}[H_n(1-H_n)] \geq \gamma 2^{-n\theta}$, by using \eqref{cont1} we get
 \begin{equation} \label{cont2}
 \mathbb{E}[H_n(1-H_n) \mid H_n > \alpha 2^{-n\theta}] \text{Pr}(H_n > \alpha 2^{-n\theta}) \geq 2^{-n\theta}(\gamma-\alpha).
 \end{equation}
  We can further write
  \begin{align}\nonumber
 \mathbb{E}[(1-H_n)] &= \mathbb{E}[1-H_n \mid H_n \leq \alpha 2^{-n\theta}] \text{Pr}(H_n \leq \alpha 2^{-n\theta}) \\
 & \; \; \: \; +  \mathbb{E}[1-H_n \mid H_n > \alpha 2^{-n\theta}] \text{Pr}(H_n > \alpha 2^{-n\theta}), \label{cont3}
 \end{align}
 and by noting the fact that $H_n \geq H_n (1-H_n)$ we can plug in \eqref{cont2} in \eqref{cont3} to obtain
 \begin{align} 
 \mathbb{E}[(1-H_n)]  & \geq  \mathbb{E}[1-H_n \mid H_n \leq \alpha 2^{-n\theta}] \text{Pr}(H_n \leq \alpha 2^{-n\theta}) \nonumber \\ 
  & \;\;\; +2^{-n\theta}(\gamma-\alpha). \label{cont4}
 \end{align}
 We now continue by using \eqref{contr} in \eqref{cont4} to obtain
 \begin{align*}
 \mathbb{E}[(1-H_n)]  &> (I(W)-\beta2^{-n\theta})(1-\alpha2^{-n\theta})+ 2^{-n\theta}(\gamma-\alpha) \\
 &\geq I(W)+2^{-n \theta} (\gamma- \alpha (1+ I(W))-\beta),
 \end{align*} 
 and since $2\alpha+ \beta=\gamma$, we get $\mathbb{E}[1-H_n]>I(W)$. However, this is a contradiction
 since $H_n$ is a martingale and $\mathbb{E}[1-H_n]=I(W)$.
\end{proof}
Let us now use the result of Lemma~\ref{gamma} to conclude the proof of Theorem~\ref{main}.
By Lemma~\ref{BMS}, we have for  $n \geq m$
\begin{align*}
\mathbb{E}[H_n(1-H_n)] & \geq 2^{-\frac{(n-m)}{\mu_m}} f_m(H(W)) \\
&= 2^{-\frac{n }{\mu_m}} ( 2^{\frac{m}{\mu_m}} f_m (H(W))). 
\end{align*}   
Thus, if we now let 
\begin{align*}
&\gamma=  2^{\frac{m}{\mu_m}} f_m (H(W)), \\
&2\alpha=\beta=\frac{\gamma}{2},
\end{align*}
then by using Lemma~\ref{gamma} we obtain
\begin{equation} \label{ineqtm}
\text{Pr}(H_n \leq \frac{\gamma}{4} 2^{-\frac{n }{\mu_m}}) \leq I(W)-\frac{\gamma}{2}2^{-\frac{n }{\mu_m}}.
\end{equation}
Now, assume we desire to achieve a rate $R$ equal to 
\begin{equation} \label{Rm}
R= I(W)- \frac{\gamma}{4}2^{-\frac{n }{\mu_m}}. 
\end{equation}
Let $\mathcal{I}_{N,R}$ be the set of indices chosen for such a rate $R$, i.e., $\mathcal{I}_{N,R}$ includes the $2^nR$ 
indices of the sub-channels with the least value of error probability.  Define the set $A$ as
\begin{align}
A= \{ i \in \mathcal{I}_{n,R}: H(W_N^{(i)}) \geq \frac{\gamma}{4} 2^{-\frac{n }{\mu_m}}\}.
\end{align}
 In this regard, note that \eqref{ineqtm} and \eqref{Rm} imply that 
\begin{equation}
\mid A \mid \geq \frac{\gamma}{4} 2^{n(1-\frac{1}{\mu_m})},
\end{equation}
and as a result, by using \eqref{bounds1} and \eqref{bounds2}  we obtain
\begin{align*} \label{sum1}
\sum_{i \in \mathcal{I}_{N,R}} E(W_N^{(i)}) \geq \sum_{i \in A} E(W_N^{(i)}) &\geq \frac{\gamma^2}{16} 2^{n(1-\frac{1}{\mu_m})} h_2^{-1}(2^{-\frac{n }{\mu_m}})\\
& \geq  \frac{\gamma^2}{16} \frac{2^{n(1-2\ \frac{1}{\mu_m} )}}{8 n\ \frac{1}{\mu_m} },
\end{align*}
 where the last step follows from the fact that for $x \in [0,\frac{1}{\sqrt{2}}]$, we have $h_2^{-1}(x) \geq \frac{x}{8 \log(\frac{1}{x})}$. Thus, having 
a block-length $N=2^n$, in order to get to block-error probability (measured by \eqref{P_e})
  less than   $ \frac{\gamma^2}{16} \frac{2^{n(1-2\ \frac{1}{\mu_m} )}}{8 n\ \frac{1}{\mu_m} }$, the rate can be at
 most $R=I(W)-\frac{\gamma}{4}2^{-\frac{n }{\mu_m}}$. 

Finally, if we let $m=8$ (by the discussion in Remark~\ref{rem_b}, we know that $m=8$ is suitable), 
then $\mu_8=\frac{1}{-\log(a_8)}=3.553$ and choosing
\begin{equation}
\epsilon= \inf_{n \in \naturals} \bigl [ \sum_{i \in \mathcal{I}_{N,R}} E(W_N^{(i)}) \bigr ], 
\end{equation}
where $R$ is given in \eqref{Rm}, then we know for sure that $\epsilon>0$ (since $\frac{1}{\mu_8}>2$) and furthermore, 
to have block-error probability less that $\epsilon$ the rate should be less than $R$ given in \eqref{Rm}.
\section{Open problems} \label{open}
 The results of this paper can be
extended in the following ways.

1) In this paper, we take the right side of \eqref{P_e} as a proxy for the block error probability and hence 
our results are with respect to the strong reliability condition \eqref{sum-eps}. A significant step 
in this regard would be to prove equivalent bounds 
for the block error probability. 

2) Another way to improve the results of this paper is to provide better values of the universal parameter $\mu$. Based on numerical 
experiments, we conjecture that the value of $\mu$ can be increased up to the scaling parameter of the channel BEC. That is, 
the right value of 
$\mu$ to plug in \eqref{Rb} is equal to $\mu=3.62713$. Thus, the ultimate goal would 
be to show that for the channel BEC, the polarization phenomenon takes place
faster than all the other BMS channels. One way to do this, is to prove that the
 functions $f_n$ defined in \eqref{f_n} are concave on 
the interval $[0,1]$. 

3) The result of Theorem~\ref{main} suggests that in terms of finite-length performance, polar codes are far from optimal. 
However, we might get different results if we consider extended polar codes with $\ell \times \ell$ kernels (\cite{KSU}). It is not very hard 
to prove that at least for the BEC,  as $\ell$ grows large, for almost all the $\ell \times \ell$ kernels 
the finite-length performance of polar codes improves towards the optimal one 
(i.e., $\mu \to 2$). However, this is  at the cost of an increase in complexity proportional to $2^\ell$. 
This suggests that 
there might still exist kernels with reasonable size with superior finite-length properties than the original $2 \times 2$ kernel.  
Hence, an interesting open problem is the finite-length  analysis of polar codes that 
are constructed from $\ell \times \ell$ kernels and relate such analysis to finding kernels with better 
finite-length properties.

\section*{Acknowledgment}
The authors wish to thank anonymous reviewers for their
valuable comments on an earlier version of this
manuscript. 


\begin{thebibliography}{1}
\providecommand{\url}[1]{#1}
\csname url@rmstyle\endcsname
\providecommand{\newblock}{\relax}
\providecommand{\bibinfo}[2]{#2}
\providecommand\BIBentrySTDinterwordspacing{\spaceskip=0pt\relax}
\providecommand\BIBentryALTinterwordstretchfactor{4}
\providecommand\BIBentryALTinterwordspacing{\spaceskip=\fontdimen2\font plus
\BIBentryALTinterwordstretchfactor\fontdimen3\font minus
  \fontdimen4\font\relax}
\providecommand\BIBforeignlanguage[2]{{%
\expandafter\ifx\csname l@#1\endcsname\relax
\typeout{** WARNING: IEEEtran.bst: No hyphenation pattern has been}%
\typeout{** loaded for the language `#1'. Using the pattern for}%
\typeout{** the default language instead.}%
\else
\language=\csname l@#1\endcsname
\fi
#2}}

\bibitem{Ari09}
E.~{Ar\i kan}, ``Channel polarization: A method for constructing
  capacity-achieving codes for symmetric binary-input memoryless channels,''
  \emph{IEEE Transactions on Information Theory}, vol.~55, no.~7, pp.
  3051--3073, 2009.


\bibitem{Art09}
E.~{Ar\i kan} and E.~{Telatar}, ``{On the rate of channel polarization},'' in
  \emph{Proc. of the IEEE Int. Symposium on Inform. Theory}, Seoul, South
  Korea, July 2009, pp. 1493--1495.


\bibitem{HMTU}
{S. H. Hassani, R. Mori, T. Tanaka and R. Urbanke}, ``Rate dependent analysis of the asymptotic behavior of channel polarization'', 
Submitted to \emph{IEEE Transactions on Information Theory}.


\bibitem{KMTU10}
{S. B. Korada, A. Montanari, E. Telatar and R. Urbanke }, ``An emprical scaling law for polar codes'', 
 in {\em Proc. ISIT} (2010), pp.884-888.

	
\bibitem{HKU10}
{S. H. Hassani, K. Alishahi and R. Urbanke}, ``On the scaling of Polar codes: II. The behavior of un-polarized channels'', in {\em Proc. ISIT} (2010), pp.879-883 .



\bibitem{RiU08}
T.~Richardson and R.~Urbanke, \emph{Modern Coding Theory}.\hskip 1em plus 0.5em
  minus 0.4em\relax Cambridge University Press, 2008.


\bibitem{KSU}
{S. B. Korada and E. {\c Sa\c so\u glu} and R. Urbanke},``Polar Codes: Characterization of Exponent, Bounds, and 
Constructions'', in {\em IEEE Trans. inform. Theory}, vol 56, no. 12 pp.6253 - 6264 (2010).





\bibitem{sturm}
See {\tt http://en.wikipedia.org/wiki/Sturms\underline{ }theorem}







  
  
  
  
  

\end{thebibliography}

\end{document}